\theoremstyle{definition}
\newtheorem{theorem}{Theorem}[section]
\newtheorem{proposition}[theorem]{Proposition}
\newtheorem{remark}[theorem]{Remark}
\def\@seccntformat#1{\@ifundefined{#1@cntformat}%
	{\csname the#1\endcsname\quad}
	{\csname #1@cntformat\endcsname}
}
\newif\ifShowComments
\def\strutdepth{\dp\strutbox}
\def\druk#1{\strut\vadjust{\kern-\strutdepth
        {\vtop to \strutdepth{%
                \baselineskip\strutdepth\vss
                        \llap{\hbox{#1}\quad}\null}}}}
\title{\bf
{\color{black}Closed-form solutions for parameter estimation in exponential families based on maximum a posteriori equations}
}
\author{
\text{Roberto Vila}$^{1}$\thanks{Corresponding author: Roberto Vila, email: {rovig161@gmail.com}
}
\,\,,
\text{Helton Saulo}$^{1,2}$,
\,\,and
\text{Eduardo Nakano}$^{1}$
\\
{\small $^{1}$ Department of Statistics, University of Brasilia, Brasilia, Brazil}\\
{\small $^{2}$ Department of Economics, Federal University of Pelotas, Pelotas, Brazil}\\
}
\begin{document}
	\maketitle 	
	\begin{abstract}
{\color{black}In this paper, we derive closed-form estimators for the parameters of certain exponential family distributions through the maximum a posteriori (MAP) equations.} A Monte Carlo simulation is conducted to assess the performance of the proposed estimators. The results show that, as expected, their accuracy improves with increasing sample size, with both bias and mean squared error approaching zero. Moreover, the proposed estimators exhibit performance comparable to that of traditional MAP and maximum likelihood (ML) estimators. A notable advantage of the proposed  method lies in its computational simplicity, as it eliminates the need for numerical optimization required by MAP and ML estimation.

	\end{abstract}
	\smallskip
	\noindent
	{\small {\bfseries Keywords.} {Exponential family $\cdot$ (MAP) equations $\cdot$ MAP estimator $\cdot$  Maximum likelihood method $\cdot$ Monte Carlo simulation $\cdot$ R software.}}
	\\
	{\small{\bfseries Mathematics Subject Classification (2010).} {MSC 60E05 $\cdot$ MSC 62Exx $\cdot$ MSC 62Fxx.}}
	
	
	\section{Introduction}
	\noindent

{\color{black}
Closed-form estimators are generally computationally efficient, avoiding the convergence issues and high computational cost commonly found in iterative optimization methods. In recent years, several authors have proposed closed-form estimators derived from likelihood-based methods. For instance, \cite{YCh2016} obtained analytical estimators for the gamma distribution by considering the generalized gamma distribution, which results from the transformation $Y=X^{1/\gamma}$ where $X$ is gamma-distributed and $\gamma>0$ is a shape parameter. Similar methodologies have been employed for the Nakagami distribution \citep{RLR2016, Zhao2021}, the weighted Lindley distribution \citep{Kim2020}, and for distributions within the exponential family framework \citep{Vila2024a, Vila2024b}, which includes the previously mentioned cases. Recently, \cite{Kim2022} proposed a novel approach for deriving closed-form estimators by extending the Box-Cox transformation.

Following this line, we propose closed-form estimators for the parameters of a family of probability distributions belonging to the exponential class, with probability density function (PDF) given by
\begin{align}\label{pdf-1}
f(x;\boldsymbol{\psi})
=
{(\mu\sigma)^\mu \over \Gamma(\mu)}\,
{\vert T'(x)\vert} T^{\mu-1}(x)
\exp\left\{-\mu \sigma T(x)\right\},
\end{align}
where $x\in (0,\infty)$, $\boldsymbol{\psi}=(\mu,\sigma)^\top$, with $\mu,\sigma>0$, and $T:(0,\infty)\to (0,\infty)$ is a real, strictly increasing and twice differentiable function, referred to as the generator. Here, $T'(x)$ denotes the derivative of $T(x)$ with respect to $x$. Note that $T(x)$ may involve other known parameters (see Table \ref{table:1} in Appendix \ref{table}).

The proposed estimators are based on maximum a posteriori (MAP) equations, and use the same conceptual framework introduced by \cite{YCh2016} and \cite{Cheng-Beaulieu2002} to develop analytical estimators for certain distributions belonging to the exponential family. The MAP method is a widely used Bayesian approach for deriving point estimates of distribution parameters. Nevertheless, it is uncommon for multi-parameter distributions to yield MAP estimators in closed-form. Therefore, the proposed method may be an alternative to the traditional MAP estimators. In fact, our Monte Carlo results showed that the proposed estimators improved with larger samples, with bias and MSE decreasing, and matched the performance of traditional MAP and maximum likelihood (ML) estimators in the gamma case.

The paper is structured as follows: Sections \ref{The New Estimators}, \ref{Simulation study}, and \ref{Concluding}. Section~\ref{The New Estimators} presents the proposed estimation methodology along with relevant theoretical developments and examples of closed-form estimators. Section~\ref{Simulation study} reports a Monte Carlo simulation study designed to evaluate the performance of the proposed estimators. Finally, Section~\ref{Concluding} concludes the paper with final considerations.
}

\section{The new estimators}\label{The New Estimators}

{\color{black}
By applying the transformation $Y=X^{1/p}$ with $p>0$, as discussed in \cite{YCh2016} and \cite{Cheng-Beaulieu2002}, and assuming that $X$ follows the density in \eqref{pdf-1}, the corresponding PDF of $Y$ takes the form:

\begin{align}\label{dist-gen-exp}
f(y;\boldsymbol{\psi},p)
=
p\, {(\mu\sigma)^\mu \over \Gamma(\mu)}\,
{y^{p-1} \vert T'(y^p)\vert} T^{\mu-1}(y^p)
\exp\left\{-\mu \sigma T(y^p)\right\},
\quad
y\in (0,\infty),
\end{align}
where $\boldsymbol{\psi}=(\mu,\sigma)^\top$ and $\mu,\sigma, p>0$. Table \ref{table:1} in Appendix \ref{table} includes examples of generator functions $T(x)$ that can be used in expressions \eqref{pdf-1} and \eqref{dist-gen-exp}.
}

Let $\{Y_i : i = 1,\ldots , n\}$ be a random sample of size $n$ from  $Y$ having PDF \eqref{dist-gen-exp}, and let $y_i$ be the correspondent observations of $Y_i$. 
The likelihood function for $(\boldsymbol{\psi},p)$ is given by
\begin{align}\label{likelihood function}
L(\boldsymbol{\psi},p\vert \boldsymbol{y})    
=
p^n\, {(\mu\sigma)^{n\mu}  \over \Gamma^n(\mu)}\,
\prod_{i=1}^{n}
y_i^{p-1} \vert T'(y_i^p)\vert T^{\mu-1}(y^p_i)\,
\exp\left\{-\mu \sigma \sum_{i=1}^{n} T(y^p_i)\right\}.  
\end{align}

Let $\pi(\boldsymbol{\psi})$ be the joint density of $\boldsymbol{\psi}=(\mu,\sigma)^\top$ and let $p\sim{\rm Gamma}(\alpha_3,\beta_3)$, for known hyperparameter $\alpha_3$, with density $\pi(p)$. Furthermore, let's suppose that $\boldsymbol{\psi}$ is independent of $p$. Note that in these notations we are committing notational abuse by using $\mu$, $\sigma$ and $p$ for the values of the random variables $\mu$, $\sigma$ and $p$, respectively. This notational abuse is common in Bayesian statistics because it avoids overloading the notation.

The posterior distribution is given by
\begin{align}\label{id-11}
\pi(\boldsymbol{\psi},p\vert \boldsymbol{y})
=
{L(\boldsymbol{\psi},p\vert \boldsymbol{y}) 
	\pi(\boldsymbol{\psi})\pi(p)\over \pi(\boldsymbol{y})},
\end{align}
where $\pi(\boldsymbol{y})\equiv \int_{(0,\infty)^3} L(\boldsymbol{\psi},p\vert \boldsymbol{y}) 
\pi(\boldsymbol{\psi})\pi(p){\rm d}\boldsymbol{\psi} {\rm d}p$ is the predictive distribution, which is finite.

The maximum a posteriori (MAP) estimation method estimates $(\boldsymbol{\psi},p)^\top$, with $\boldsymbol{\psi}=(\mu,\sigma)^\top$, as the mode of the posterior distribution of the random variable $\pi(\boldsymbol{\psi},p\vert \boldsymbol{y})$:	
\begin{align*}
\widehat{(\boldsymbol{\psi},p)}_{\rm MAP}
=
\underset{(\boldsymbol{\psi},p)}{\rm arg\, max}
\pi(\boldsymbol{\psi},p\vert \boldsymbol{y}).
\end{align*}

%

Since $\log(\pi(\boldsymbol{\psi},p\vert \boldsymbol{y}))$ is differentiable over parameter space,
for the occurrence of a maximum (or a minimum) it is necessary that
\begin{align}\label{MAP-equations}
\nabla \log(\pi(\boldsymbol{\psi},p\vert \boldsymbol{y}))=\boldsymbol{0},
\end{align}
where $\nabla=(\partial/\partial\mu,\partial/\partial\sigma,\partial/\partial p)$ is the gradient vector and $\boldsymbol{0}$ is the zero-vector for three-dimensional space. The system in \eqref{MAP-equations} are known as the MAP equations.

\smallskip
For simplicity, from now on we assume the components of the random vector $\boldsymbol{\psi}=(\mu,\sigma)^\top$ are independent and have gamma distributions. That is, the priors $\mu\sim{\rm Gamma}(\alpha_1,\beta_1)$, 
$\sigma\sim{\rm Gamma}(\alpha_2,\beta_2)$ and 
$p\sim{\rm Gamma}(\alpha_3,\beta_3)$, $\alpha_3\neq 1$, are independent, for known hyperparameters $\alpha_k, \beta_k$, $k=1,2,3$.

By taking the logarithm of the posterior probability \eqref{id-11} and by using the formula for $L(\boldsymbol{\psi},p\vert \boldsymbol{y})$ given in \eqref{likelihood function}, we have
\begin{align*}
\log(\pi(&\boldsymbol{\psi},p\vert \boldsymbol{y}))
=
\log(L(\boldsymbol{\psi},p\vert \boldsymbol{y})) 
+
\log(\pi(\mu))
+
\log(\pi(\sigma))
+
\log(\pi(p))
-
\log(\pi(\boldsymbol{y}))
\\[0,3cm]
&=
n\log(p)
+
{n\mu}\log(\mu) 
+ 
{{n\mu} \log(\sigma)
	-
	n\log(\Gamma(\mu))}
-
\log(\pi(\boldsymbol{y}))
\\[0,1cm]
&
+
\sum_{i=1}^{n}
\log(\vert T'(y_i^p)\vert)
+
(p-1)
\sum_{i=1}^{n}
\log(y_i)
-
\mu \sigma \sum_{i=1}^{n} T(y^p_i)
+
(\mu-1)\sum_{i=1}^{n}\log(T(y^p_i))
\\[0,2cm]
&
+
(\alpha_1-1)\log(\mu)
-
\beta_1 \mu
+
(\alpha_2-1)\log(\sigma)
-
\beta_2 \sigma
+
(\alpha_3-1)\log(p)
-
\beta_3 p
\\[0,2cm]
&+
\alpha_1\log(\beta_1)-\log(\Gamma(\alpha_1))
+
\alpha_2\log(\beta_2)-\log(\Gamma(\alpha_2))
+
\alpha_3\log(\beta_3)-\log(\Gamma(\alpha_3)).
\end{align*}
The MAP estimator  of $(\boldsymbol{\psi},p)^\top$ satisfies the MAP equations \eqref{MAP-equations}, that is, 
\begin{align*}
{\partial \log(\pi(\boldsymbol{\psi},p\vert \boldsymbol{Y})) \over\partial \mu}
&=
n\log(\mu)+n\log(\sigma)+n-n\psi^{(0)}(\mu)
\\[0,2cm]
&-
\sigma \sum_{i=1}^{n} T(Y^p_i)
+
\sum_{i=1}^{n}\log(T(Y^p_i))
+
{\alpha_1-1\over \mu}-\beta_1
=
0,  
\\[0,2cm]
{\partial \log(\pi(\boldsymbol{\psi},p\vert \boldsymbol{Y})) \over\partial \sigma}
&=
{n\mu \over \sigma} 
-
\mu \sum_{i=1}^{n} T(Y^p_i)
+
{\alpha_2-1\over\sigma}-\beta_2
=
0,
\\[0,2cm]
{\partial \log(\pi(\boldsymbol{\psi},p\vert \boldsymbol{Y})) \over\partial p}
&=
{n\over p}
+
{1\over p}
\sum_{i=1}^{n}
{T''(Y^p_i)\over T'(Y_i^p)}\, Y^p_i \log(Y_i^p)
+
{1\over p}
\sum_{i=1}^{n}
\log(Y_i^p)
\\[0,2cm]
&
-
{\mu \sigma\over p} \sum_{i=1}^{n} T'(Y^p_i) Y^p_i \log(Y_i^p)
+
{(\mu-1)\over p}\sum_{i=1}^{n} {T'(Y^p_i)\over T(Y^p_i)}\,  Y^p_i \log(Y_i^p)
\\[0,2cm]
&+
{\alpha_3-1\over p}-\beta_3
=
0, 
\end{align*}
where $\psi^{(m)}(x)=\partial^{m+1} \log(\Gamma(x))/\partial x^{m+1}$ is the polygamma function of order $m$. 
Note that the above equations become
\begin{align*}
	{\partial \log(\pi(\boldsymbol{\psi},p\vert \boldsymbol{Y})) \over\partial \mu}
	&=
	n\log(\mu)+ n\log(\sigma)+n-n\psi^{(0)}(\mu)
	-
	\sigma  \sum_{i=1}^{n} T(X_i)
	+ 
	\sum_{i=1}^{n}\log(T(X_i))
	\\[0,2cm]
	&
	+
	{\alpha_1-1\over \mu}-\beta_1
	 =
	0,  
	\\[0,2cm]
	{\partial \log(\pi(\boldsymbol{\psi},p\vert \boldsymbol{Y})) \over\partial \sigma}
	&=
	{n\mu \over \sigma} 
	-
	\mu \sum_{i=1}^{n} T(X_i)
	+
	{\alpha_2-1\over\sigma}-\beta_2
	=
	0,
	\\[0,2cm]
	{\partial \log(\pi(\boldsymbol{\psi},p\vert \boldsymbol{Y})) \over\partial p}
	&=
		\bigg[
	{n}
	+
	\sum_{i=1}^{n}
	{T''(X_i)\over T'(X_i)}\, X_i \log(X_i)
	+
	\sum_{i=1}^{n}
	\log(X_i)
		-
	{\mu \sigma} \sum_{i=1}^{n} T'(X_i) X_i \log(X_i)
	\\[0,2cm]
	&
	+
	{(\mu-1)}\sum_{i=1}^{n} {T'(X_i)\over T(X_i)}\,  X_i \log(X_i)
	+
	{\alpha_3-1}
		\Bigg]
		{1\over p}
	-
	\beta_3
	=
	0, 
\end{align*}
because $Y_i = X_i^{1/p}$, $\forall i=1,\ldots,n$.
Applying expectation over $p$ to both sides of ${\partial \log(\pi(\boldsymbol{\psi},p\vert \boldsymbol{Y})) /\partial p} = 0$ yields
\begin{align*}
&
		\bigg[
{n}
+
\sum_{i=1}^{n}
{T''(X_i)\over T'(X_i)}\, X_i \log(X_i)
+
\sum_{i=1}^{n}
\log(X_i)
\\[0,2cm]
&
-
{\mu \sigma} \sum_{i=1}^{n} T'(X_i) X_i \log(X_i)
+
{(\mu-1)}\sum_{i=1}^{n} {T'(X_i)\over T(X_i)}\,  X_i \log(X_i)
\Bigg]
{\beta_3\over \alpha_3-1}
=
0,
\end{align*}
because $p\sim{\rm Gamma}(\alpha_3,\beta_3)$ and $\mathbb{E}(1/p)={\beta_3/(\alpha_3-1)}$, for $\alpha_3\neq 1$.
Setting $\alpha_3=\beta_3=\kappa$ in the above equation and then letting $\kappa \to \infty$, we obtain
\begin{align*}
	&
	1
	+
	{1\over n}
	\sum_{i=1}^{n}
	{T''(X_i)\over T'(X_i)}\, X_i \log(X_i)
	+
		{1\over n}
	\sum_{i=1}^{n}
	\log(X_i)
	\\[0,2cm]
	&
	-
	{\mu \sigma} \, 	{1\over n} \sum_{i=1}^{n} T'(X_i) X_i \log(X_i)
	+
	{(\mu-1)}\, 	{1\over n}\sum_{i=1}^{n} {T'(X_i)\over T(X_i)}\,  X_i \log(X_i)
	=
	0,
\end{align*}
By resolving the above equation, we can express $\mu$ as a function of $\sigma$ as follows:
\begin{align}\label{sigma-1}
	\mu(\sigma)
	=
	\dfrac{\displaystyle 
		1
		+
		\overline{X}_2
	}{
		\sigma
		\overline{X}_4
		-		
		\overline{X}_3
	},
\end{align}
where
\begin{align}\label{eqq-1}
	\begin{array}{lllll}
&\displaystyle
\overline{X}_2
\equiv 
{1\over n}
\sum_{i=1}^{n}
h_2(X_i),
\quad 
h_2(x)
\equiv 
\log(x)
+
\left[{T''(x)\over T'(x)}-{T'(x)\over T(x)}\right] x \log(x),
\\[0,5cm]
&\displaystyle
\overline{X}_3
\equiv
{1\over n}
\sum_{i=1}^{n} 
h_3(X_i),
\quad 
h_3(x)
\equiv 
{T'(x)\over T(x)}\,  x \log(x),
\\[0,5cm]
&\displaystyle
\overline{X}_4
\equiv
{1\over n}\sum_{i=1}^{n}
h_4(X_i),
\quad 
h_4(x)
\equiv
 T'(x) x \log(x).
	\end{array}
\end{align}
By replacing \eqref{sigma-1} into ${\partial \log(\pi(\boldsymbol{\psi},p\vert \boldsymbol{Y})) /\partial \sigma}= 0$, we obtain the closed-form estimator for $\sigma$ as follows:
%
%
{\small
\begin{align}\label{estimador-mu-new}
	\widehat{\sigma}
	=
	{\displaystyle
		{{1\over n} \, \overline{X}_5}
		-	
		\left(
		{1}
		+
		\overline{X}_2\right)
		\overline{X}_1
		+ 
		\sqrt{
			\left[
			{{1\over n} \, \overline{X}_5}
			-
			\left(
			{1}
			+
			\overline{X}_2\right)
			\overline{X}_1
			\right]^2
			-
			4\, {\beta_2\over n}\, \overline{X}_4	
			\left[
			\left({\alpha_2-1\over n}\right)	
			\overline{X}_3
			-
			\left(
			{1}
			+
			\overline{X}_2
			\right)
			\right]
		}
		\over 
		\displaystyle
		2\,{\beta_2\over n}\,\overline{X}_4
	},
\end{align}
}\noindent
with
\begin{align}\label{eqq-2}
	\begin{array}{lllll}
&\displaystyle
\overline{X}_1
\equiv 
{1\over n}\sum_{i=1}^{n} 
h_1(X_i),
\quad 
h_1(x)
\equiv
T(x),
\\[0,5cm]
&\displaystyle
{\overline{X}_5}
\equiv
{\beta_2}\,
\overline{X}_3
+
\left({\alpha_2-1}\right)\overline{X}_4
=
{1\over n}\sum_{i=1}^{n} 
h_5(X_i),
\quad 
h_5(x)
\equiv
{\beta_2}\,
h_3(x)
+
\left({\alpha_2-1}\right)h_4(x).
	\end{array}
\end{align}

By plugging \eqref{estimador-mu-new} into \eqref{sigma-1}, the closed-form estimator for $\mu$ is given by
\begin{align}\label{estimador-sigma-new}
\widehat{\mu}
=
\dfrac{\displaystyle 
	1
	+
	\overline{X}_2
}{
	\widehat{\sigma}
	\overline{X}_4
	-		
	\overline{X}_3
}.
\end{align}

\begin{remark}
Notice that the new estimators $\widehat{\sigma}$ and $\widehat{\mu}$ are independent of the known hyperparameters $\alpha_1$ and $\beta_1$.
\end{remark}

\begin{proposition}\label{prop-main}
 If $T(x) =x^{-s}$, for some $s\neq 0$, then, the closed-form estimators for  $\sigma$ and $\mu$ are given by
\begin{align}\label{estimador-mu-new-new-special}
	\widehat{\sigma}
=
{\displaystyle
	{{1\over n}\, \overline{X}_5}
	-	
	\overline{X}_1
	+ 
	\sqrt{
		\left(
	{{1\over n}\,\overline{X}_5}
		-
		\overline{X}_1
		\right)^2
		-
		4\, {\beta_2\over n}\, \overline{X}_4	
		\left[
		\left({\alpha_2-1\over n}\right)	
		\overline{X}_3
		-
1
		\right]
	}
	\over 
	\displaystyle
	2\,{\beta_2\over n}\,\overline{X}_4
}
\end{align}
and
\begin{align}\label{estimador-sigma-new-special}
\widehat{\mu}
=
{\displaystyle
	{1}
	\over 
	\widehat{\sigma}
	\overline{X}_4
	-		
	\overline{X}_3
},
\end{align} 
 respectively, where
 \begin{align*}
&\overline{X}_1
=
{1\over n}\sum_{i=1}^{n} X_i^{-s},
\quad
\overline{X}_3
=
{1\over n}\sum_{i=1}^{n} \log(X_i^{-s}),
\quad
\overline{X}_4
=
{1\over n}\sum_{i=1}^{n} X_i^{-s} \log(X_i^{-s}),
\\[0,2cm]
&{\overline{X}_5}
=
{\beta_2}\left[{1\over n}\sum_{i=1}^{n} \log(X_i^{-s})\right]
+
\left({\alpha_2-1}\right)\left[{1\over n}\sum_{i=1}^{n} X_i^{-s} \log(X_i^{-s})\right].
 \end{align*}
\end{proposition}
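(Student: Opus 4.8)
The plan is to obtain the proposition as a direct specialization of the general closed-form expressions \eqref{estimador-mu-new} and \eqref{estimador-sigma-new}: all that is needed is to evaluate the auxiliary functions $h_1,\dots,h_5$ of \eqref{eqq-1}--\eqref{eqq-2} at the power generator $T(x)=x^{-s}$. First I would record the derivatives $T'(x)=-s\,x^{-s-1}$ and $T''(x)=s(s+1)\,x^{-s-2}$, whence $T''(x)/T'(x)=-(s+1)/x$ and $T'(x)/T(x)=-s/x$.

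The key point is that the bracket defining $h_2$ collapses for this generator:
\begin{align*}
\frac{T''(x)}{T'(x)}-\frac{T'(x)}{T(x)}=-\frac{s+1}{x}+\frac{s}{x}=-\frac{1}{x},
\qquad\text{so that}\qquad
h_2(x)=\log(x)+\left[-\frac{1}{x}\right]x\log(x)=0 .
\end{align*}
Hence $\overline{X}_2\equiv 0$, i.e. $1+\overline{X}_2=1$. Substituting this into \eqref{estimador-mu-new} yields exactly \eqref{estimador-mu-new-new-special}, and substituting it into \eqref{estimador-sigma-new} yields \eqref{estimador-sigma-new-special}.

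It then remains to rewrite the remaining averages in the stated form. From $h_1(x)=T(x)=x^{-s}$ one gets the expression for $\overline{X}_1$; from $h_3(x)=\big(T'(x)/T(x)\big)x\log(x)=-s\log(x)=\log(x^{-s})$ the expression for $\overline{X}_3$; from $h_4(x)=T'(x)x\log(x)=-s\,x^{-s}\log(x)=x^{-s}\log(x^{-s})$ the expression for $\overline{X}_4$; and finally $\overline{X}_5=\beta_2\overline{X}_3+(\alpha_2-1)\overline{X}_4$ straight from its definition in \eqref{eqq-2}. Collecting these gives the claimed formulas.

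I do not anticipate a genuine obstacle here, since the statement is a corollary of the general derivation; the only step requiring a little care is the cancellation $h_2\equiv 0$, which is precisely the structural feature of the power generator responsible for the collapse to the simpler displayed estimators. I would also note in passing that the hypothesis $s\neq 0$ is what keeps $T$ non-constant (and strictly monotone on $(0,\infty)$), so that $\overline{X}_4$ is generically nonzero and the denominators in \eqref{estimador-mu-new-new-special}--\eqref{estimador-sigma-new-special} are well defined.
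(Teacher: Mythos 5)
Your proposal is correct and follows essentially the same route as the paper: the paper's proof is precisely the observation that $T(x)=x^{-s}$ forces $\overline{X}_2=0$, after which \eqref{estimador-mu-new} and \eqref{estimador-sigma-new} reduce to \eqref{estimador-mu-new-new-special} and \eqref{estimador-sigma-new-special}. Your explicit verification of $h_2\equiv 0$ and of the forms of $\overline{X}_1,\overline{X}_3,\overline{X}_4,\overline{X}_5$ simply spells out the "simple observation" the paper leaves implicit.
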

\begin{proof}
A simple observation shows that, if $T(x) =x^{-s}$, for some $s\neq 0$, then $\overline{X}_2=0$. Hence, from  \eqref{estimador-sigma-new} and \eqref{estimador-mu-new},   $\widehat{\sigma}$ and $\widehat{\mu}$ can be written as in \eqref{estimador-mu-new-new-special} and \eqref{estimador-sigma-new-special}, respectively.
\end{proof}

\begin{remark}
When $\mu = \mu_0$ is constant, the MAP equations yield 
\begin{align*}
	\widehat{\sigma} = \dfrac{\displaystyle {1\over\mu_0}(1+\overline{X}_2)+\overline{X}_3}{\overline{X}_4}. 
\end{align*}
In Table \ref{table:1} of Appendix \ref{table}, examples of distributions that fit within the condition $\mu = \mu_0$ constant are:
Maxwell-Boltzmann, Rayleigh, Weibull, Inverse Weibull (Fréchet), Gompertz, Traditional Weibull, Flexible Weibull, Burr type XII (Singh-Maddala) and Dagum (Mielke Beta-Kappa).

\bigskip 
Likewise, when $\sigma = \sigma_0$ is constant, 
\begin{align*}
	\widehat{\mu}
	=
	\dfrac{\displaystyle 
		1
		+
		\overline{X}_2
	}{
		\sigma_0
		\overline{X}_4
		-		
		\overline{X}_3
	}.
\end{align*}
The only example in Table \ref{table:1} of a distribution that fits the condition  $\sigma = \sigma_0$ constant is the Modified Weibull extension.
\end{remark}

\begin{remark}
	When $\sigma$ is a function of $\mu$, that is $\sigma = g(\mu)$, for some function $g$, the estimator for $\mu$ can be derived from \eqref{estimador-sigma-new} by resolving the following equation:
	\begin{align*}
	\widehat{\mu}g(\widehat{\mu})={1+\overline{X}_3\over \overline{X}_4}.
	\end{align*}
	Examples in Table \ref{table:1} that fit within the condition $\sigma = g(\mu)$ are: $\delta$-gamma and Chi-squared, with $g(x)=1/(\delta x)$, $\delta$ known, and $g(x)=1/(2 x)$, respectively.
\end{remark}

\section{\color{black}Illustrative Monte Carlo simulation study}\label{Simulation study}

{\color{black}We perform two Monte Carlo simulation studies. The first assesses the performance of the proposed closed-form estimators for $\mu$ and $\sigma$ under the gamma, inverse gamma, Weibull, and inverse Weibull distributions. The second focuses on the gamma distribution, comparing the proposed estimators with traditional MAP estimators.} To evaluate the performance of the proposed and traditional MAP estimators, we computed the relative bias and mean squared error (MSE), defined respectively as
\begin{eqnarray*}
 \widehat{\mathrm{Relative\, Bias}}(\widehat{\theta}) =  \frac{1}{N} \sum_{i = 1}^{N} \left|\frac{\widehat{\theta}^{(i)} - \theta}{\theta}\right|, \quad
 \widehat{\mathrm{MSE}}(\widehat{\theta}) = {\frac{1}{N} \sum_{i = 1}^{N} \left(\widehat{\theta}^{(i)} - \theta\right)^2},
\end{eqnarray*}
where $\theta \in \{\mu, \sigma\}$ denotes the true parameter value, and $\widehat{\theta}^{(i)} \in \{\widehat{\mu}^{(i)}, \widehat{\sigma}^{(i)}\}$ represents the $i$th Monte Carlo estimate.  The number of replications is $N = 10,000$. Throughout all simulations, we used the fixed hyperparameter configuration $\alpha_1 = \beta_1=\alpha_2 = \beta_2 = 1/100$, and considered the following sample sizes $n \in \{15, 30, 60, 120, 240, 480, 760\}$. We set $\mu = 2$ and $\sigma = 1$.

\subsection{Performance of the proposed closed-form estimators}

We carried out simulations for assessing the performance of the proposed closed-form estimators of $\sigma$ and $\mu$ under the gamma, inverse gamma, Weibull and inverse Weibull distributions. Note that the estimators of $\mu$ and $\sigma$, based on these distributions, are given by (see  Proposition \ref{prop-main})
\begin{align}\label{estimador-a}
	\widehat{\sigma}
	=
	{\displaystyle
		{{1\over n}\, \overline{X}_5}
		-	
		\overline{X}_1
		+ 
		\sqrt{
			\left(
			{{1\over n}\,\overline{X}_5}
			-
			\overline{X}_1
			\right)^2
			-
			4\, {\beta_2\over n}\, \overline{X}_4	
			\left[
			\left({\alpha_2-1\over n}\right)	
			\overline{X}_3
			-
	1
			\right]
		}
		\over 
		\displaystyle
		2\,{\beta_2\over n}\,\overline{X}_4
	},
\end{align}
\begin{align}\label{estimador-b}
	\widehat{\mu}
	=
	{\displaystyle
		{1}
		\over 
		\widehat{\sigma}
		\overline{X}_4
		-		
		\overline{X}_3
	},
\end{align} 
respectively,
with
\begin{align}\label{estimador-c}
	\begin{array}{lllll}
	& \displaystyle
	\overline{X}_1
	=
	{1\over n}\sum_{i=1}^{n} X_i^{-s},
	\quad
	\overline{X}_3
	=
	{1\over n}\sum_{i=1}^{n} \log(X_i^{-s}),
	\quad
	\overline{X}_4
	=
	{1\over n}\sum_{i=1}^{n} X_i^{-s} \log(X_i^{-s}),
	\\[0,6cm]
	& \displaystyle
	{\overline{X}_5}
	=
	{\beta_2}\left[{1\over n}\sum_{i=1}^{n} \log(X_i^{-s})\right]
	+
	\left({\alpha_2-1}\right)\left[{1\over n}\sum_{i=1}^{n} X_i^{-s} \log(X_i^{-s})\right],
		\end{array}
\end{align}
where the choice of $s$ dictates the appropriate estimator for the parameters of the aforementioned four distributions (see Table \ref{table:2-1}).
%
\begin{center}
	\begin{table}[!htb]
	\centering
	\caption{Some values of $s$ for use in 
		\eqref{estimador-a}, \eqref{estimador-b} and \eqref{estimador-c}.}
	\label{table:2-1}
	\begin{tabular}[t]{ccccc}
		\toprule
		Distribution & Gamma  & Inverse gamma & Weibull & Inverse Weibull  \\
		\midrule
		\rowcolor{gray!10} 
		Value of $s$ & $-1$ & $1$ & $-\delta$  & $\delta$ \\
		\bottomrule
	\end{tabular}
\end{table}
\end{center}

Figure~\ref{fig:mc_proposed} shows the Monte Carlo results. From this figure, we note that, as expected, both relative bias and MSE decrease as the sample size increases for all distributions and parameters. Particularly, the relative biases, in the $\sigma$ case, remain close to zero even for small samples.

\begin{figure}[!ht]
\centering
\includegraphics[width=0.48\textwidth]{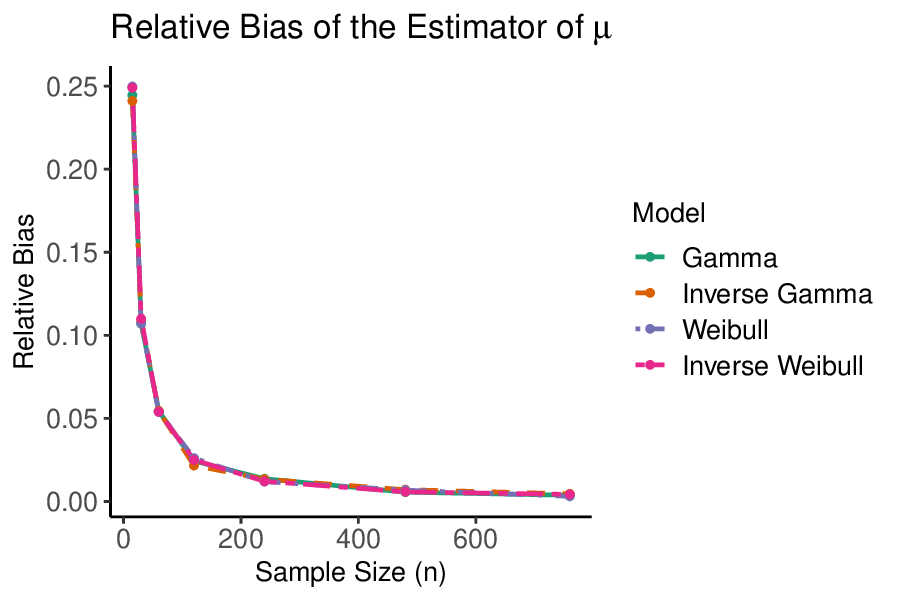}
\includegraphics[width=0.48\textwidth]{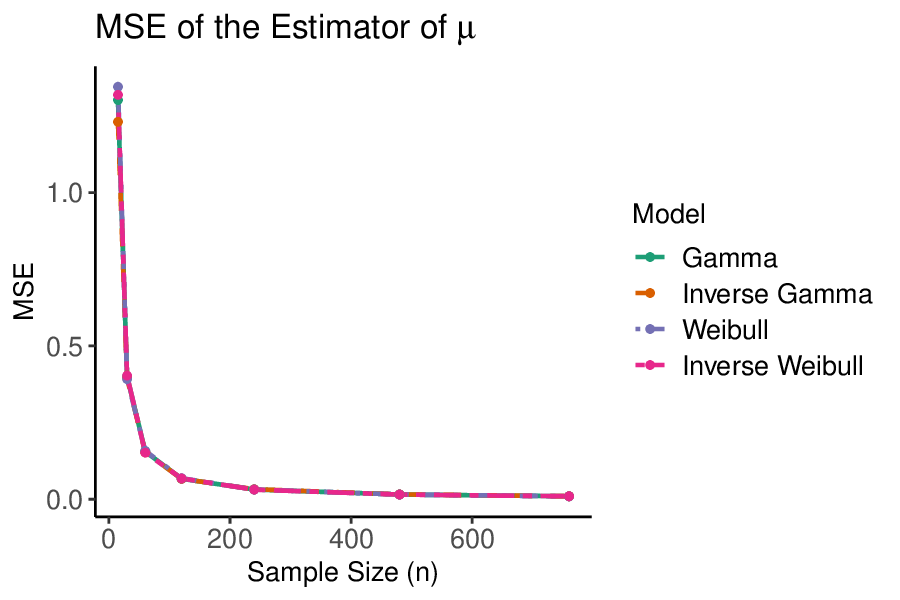}

\vspace{0.4cm}

\includegraphics[width=0.48\textwidth]{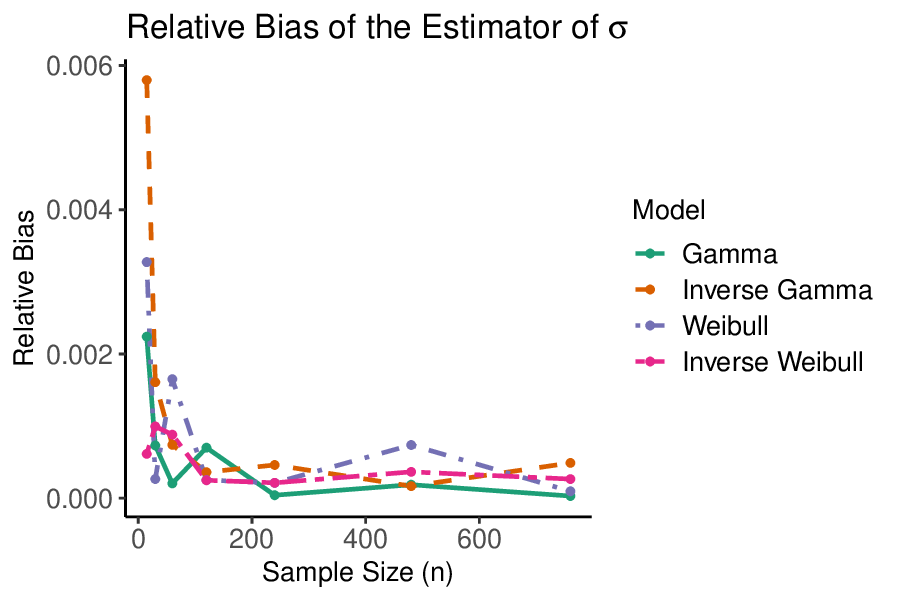}
\includegraphics[width=0.48\textwidth]{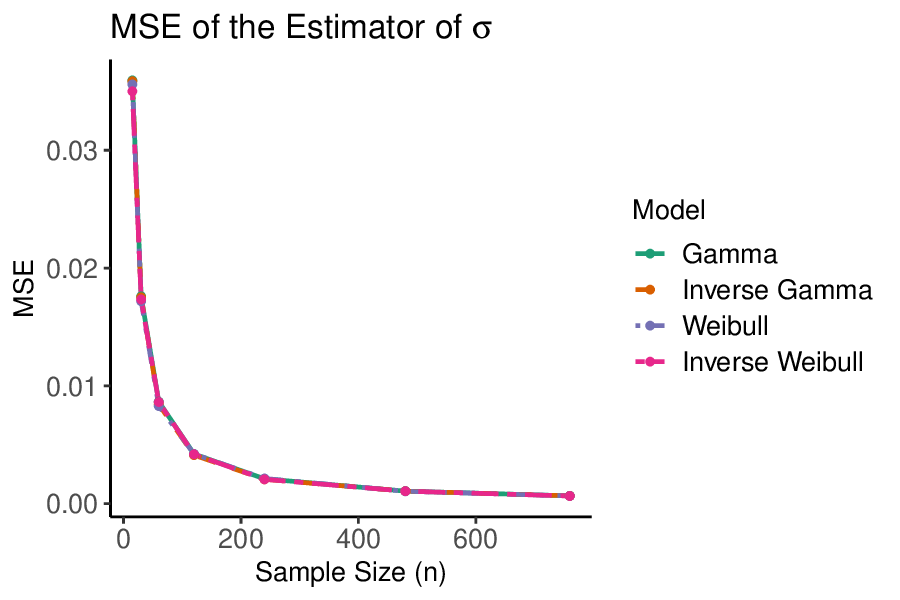}

\caption{Monte Carlo results for the estimators $\widehat{\mu}$ and $\widehat{\sigma}$ for the indicated models. }
\label{fig:mc_proposed}
\end{figure}

\subsection{Comparison with traditional MAP and ML estimators}
We here compare the proposed closed-form estimators with the traditional MAP and ML estimators in the gamma case. Figure~\ref{fig:mle_comparison} shows the relative bias and MSE of both estimators for $\mu$ and $\sigma$ across different sample sizes. As expected, all three estimators improve with increasing $n$. The MAP estimators exhibit slightly lower bias for $\mu$, whereas the ML estimator for $\sigma$ exhibits noticeably higher bias compared to the proposed and MAP estimators. In terms of MSE,  all three approaches present similar performance. Overall, the considered methods provide comparable performance, with the advantage of the proposed estimators lying in their computational simplicity, which avoids numerical optimization.

\begin{figure}[!ht]
    \centering
    \includegraphics[width=0.70\textwidth]{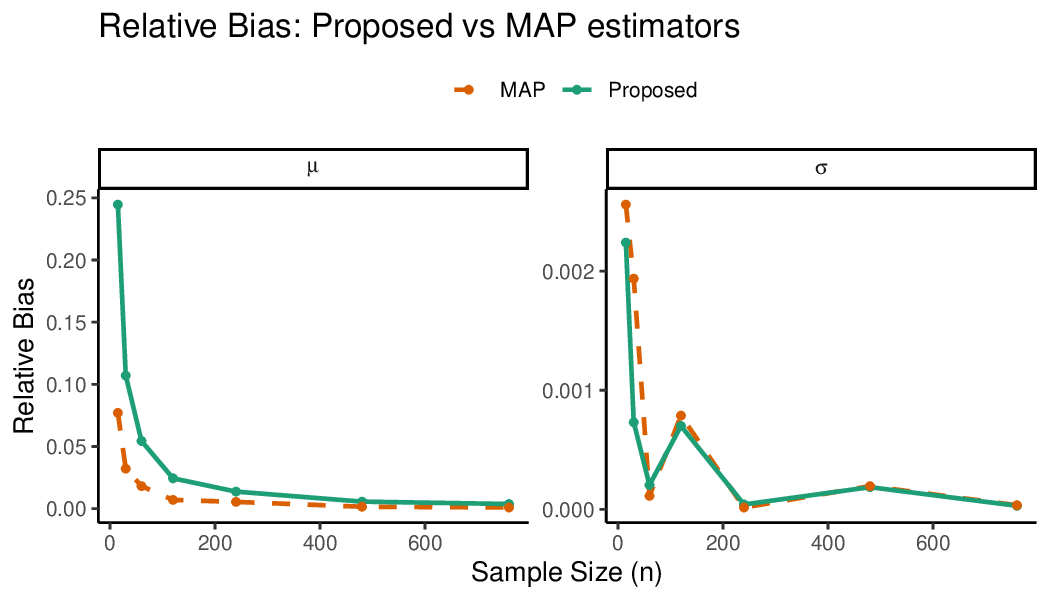}
    \includegraphics[width=0.70\textwidth]{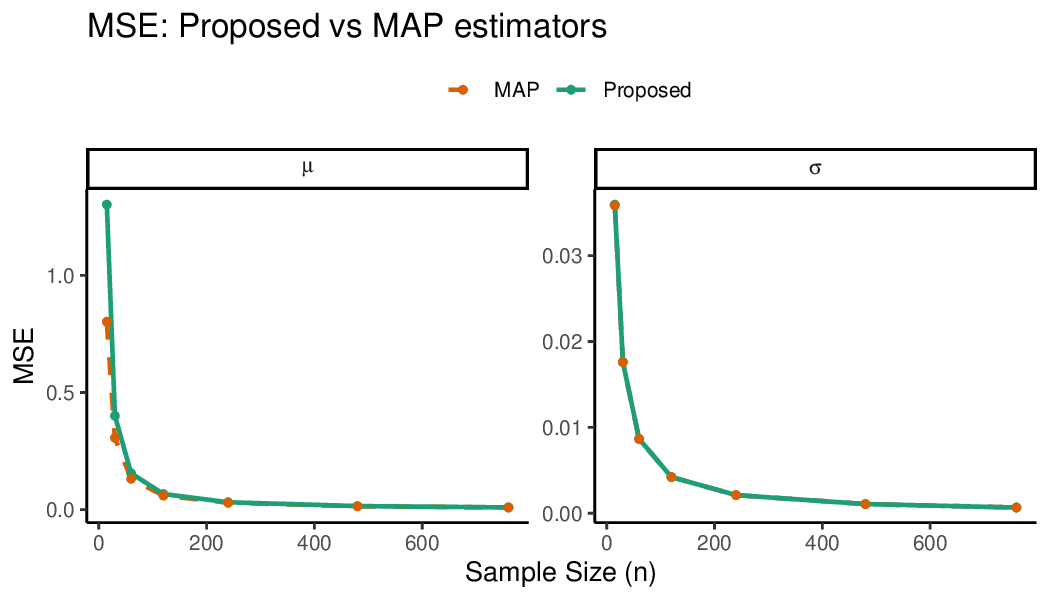}
    \caption{Monte Carlo results comparing the bias and MSE of the proposed, MAP, and ML estimators for the gamma distribution parameters.}
    \label{fig:mle_comparison}
\end{figure}

\section{Concluding remarks}\label{Concluding}

In this paper, we introduced closed-form estimators for a flexible exponential family derived from maximum a posteriori equations. The proposed methodology has as main advantage the elimination of numerical optimization. This feature makes the estimators particularly attractive in applications where computational simplicity is crucial. Monte Carlo simulations revealed that, as expected, the performance of the proposed estimators improved with increasing sample size, with both bias and mean squared error tending toward zero. In the gamma case, the proposed estimators demonstrated comparable performance with the traditional maximum a posteriori and maximum likelihood estimators. Future research may focus on extending the proposed methodology to multivariate distributions and exploring additional distribution families. These investigations are underway, and we hope to report the results in future.

	\paragraph{Acknowledgements}
Funding for this research was provided in part by CAPES, Brazil (Finance Code 001).
	
	\paragraph{Disclosure statement}
The authors declare that they have no conflict of interest.




\begin{thebibliography}{10}
	

	\bibitem[Bebbington et al., 2007]{Bebbington2007}
Bebbington, M., Lai, C. D., \& Zitikis, R. (2007).
A flexible Weibull extension, 
\emph{Reliability Engineering \& System Safety}, 92,
 719--726.
 
	\bibitem[Bernardo and Smith, 1993]{Bernardo1993}
Bernardo, J. M., \&  Smith,  A. F. M. (1993).
\emph{Bayesian Theory}, 
Wiley. 


	\bibitem[Burr, 1942]{Burr1942}
 Burr, I. W.  (1942).
Cumulative frequency functions, 
\emph{Annals of Mathematical Statistics}, 13(2), 215--232.

	\bibitem[Cook, 2008]{Cook2008}
Cook,  J. D. (2008).
\emph{Inverse gamma distribution}, 
Online: \url{http://www. johndcook. com/inverse gamma.
pdf}, Technical. Report.
	
	\bibitem[Cheng and Beaulieu, 2002]{Cheng-Beaulieu2002}
	Cheng, J., \& Beaulieu, N. C. (2002).
	Generalized moment estimators for
	the Nakagami fading parameter, 
	\emph{IEEE Communications Letters}, 6(4),
	144--146.
	
	\bibitem[Dagum, 1975]{Dagum1975}
	Dagum, C. (1975).
	A model of income distribution and the conditions of existence of moments of finite order, 
	\emph{Bulletin of the International Statistical Institute}.  (Proceedings of the 40th Session of the ISI, Contributed Paper) , 46, 199--205.
	
	\bibitem[Dunbar, 1982]{Dunbar1982}
	Dunbar, R. C.   (1982).  
	Deriving the Maxwell Distribution, 
	\emph{Journal of Chemical Education }, 59, 22--23.
	
	
	\bibitem[Gompertz, 1825]{Gompertz1825}
	Gompertz, B. (1825).
	On the nature of the function expressive of the law of human mortality and on the new model of determining the value of life contingencies, 
	\emph{Philosophical Transactions of the Royal Society
	of London}, 115, 513--585.
	
	\bibitem[Johnson et al., 1994]{Johnson1994}
Johnson, N. L., Kotz, S., \& Balakrishnan, N. (1994). 
	\emph{Continuous Univariate Distributions}, 
	New York: John
	Wiley \& Sons.
	
	\bibitem[Khan et al., 2008]{khan2008}
	Khan, M. S. , Pasha, G. R., \& Pasha, A. H. (2008).
	Theoretical analysis of inverse Weibull distribution,
	\emph{WSEAS Transactions on Mathematics}, 7(2), 30--38.
	
	\bibitem[Kim and Jang, 2020]{Kim2020}	
	Kim, H.-M., \& Jang,  Y.-H. (2020). 
	New closed-form estimators for weighted Lindley distribution, 
	\emph{Journal of the Korean Statistical Society}. 
	DOI: \url{https://link.springer.com/article/10.1007/s42952-020-00097-y}
	
	\bibitem[Kim et al., 2022]{Kim2022}	
	Kim, H.-M. , Kim, S.,  Jang, Y.-H., \&  Zhao, J.  (2022).
	New closed-form estimator and its properties,
	\emph{Journal of the Korean Statistical Society}, 51,  47--64. 
	DOI: \url{https://doi.org/10.1007/s42952-021-00118-4}
	
	\bibitem[Laurenson, 1994]{Laurenson1994}
	Laurenson, D.  (1994).
	\emph{Nakagami Distribution}, 
	Indoor Radio Channel Propagation Modeling by Ray Tracing Techniques.
	
	\bibitem[Lee and Gross, 1991]{Lee1991}
	Lee, M., \&  Gross, A. (1991). 
	Lifetime distributions under unknown environment, 
	\emph{Journal of Statistical Planning and Inference
	}, 29, 137--143.
	
	\bibitem[Nadarajah and Kotz, 2005]{Nadarajah2005}
	Nadarajah, S., \& Kotz, S. (2005).
	On some recent modifications of Weibull distribution, 
	\emph{IEEE Transactions on Reliability}, 54, 561--562.
	
	\bibitem[Rahman et al., 2014]{Rahman2014}
	Rahman, G., Mubeen, S. , Rehman, A., \& Naz, M. (2014).
	On $k$-Gamma and $k$-Beta Distributions and Moment Generating Functions,
	\emph{Journal of Probability and Statistics}, 2014, Article ID 982013, 6 pages. DOI: \url{ https://doi.org/10.1155/2014/982013}
	
	\bibitem[Ramos et al., 2016]{RLR2016}
	Ramos, P. L.,  Louzada, F., \& Ramos, E. (2016).
	An Efficient, Closed-Form MAP Estimator for Nakagami-m Fading Parameter,
	\emph{IEEE Communications Letters}, 20(11), 2328--2331.
	
	\bibitem[Rayleigh, 1880]{Rayleigh1880}
	Rayleigh, J. W. S.  (1880).
	On the resultant of a large number of vibrations of the same pitch and of arbitrary phase,
	\emph{Philosophical Magazine Series}, 5(10), 73--78.
	
	\bibitem[Stacy, 1962]{Stacy1962}
	 Stacy, E. W. (1962).
	A Generalization of the Gamma Distribution,
	\emph{Annals of Mathematical Statistics}, 33(3),  1187--1192
	
	\bibitem[Vila et al., 2024a]{Vila2024a}
	Vila, R., Nakano, E., \& Saulo, H. (2024).
	Closed-form estimators for an exponential family derived from likelihood equations,
\emph{Manuscript submitted for publication. Preprint.} \url{https://arxiv.org/pdf/2405.14509}.
	
		\bibitem[Vila et al., 2024b]{Vila2024b}
	Vila, R., Nakano, E., \& Saulo, H. (2024).
	Novel closed-form point estimators for a weighted
	exponential family derived from likelihood equations, \emph{Stat},  13(3), 1--10,  .
%
	

	
	\bibitem[Xie et al., 2022]{Xie2022}
	Xie, M. , Tang, Y., \& Goh, T. N. (2002).
	A modified Weibull extension with bathtub-shaped failure rate function, \emph{Reliability Engineering \& System Safety}, 76(3), 279--285. 
	
	\bibitem[Ye and Chen, 2017]{YCh2016}
	Ye, Z-S., \& Chen, N.  (2017).
	Closed-Form Estimators for the
	Gamma Distribution Derived from Likelihood Equations, 
	\emph{The American Statistician}. DOI: \url{http://dx.doi.org/10.1080/00031305.2016.1209129}

	\bibitem[Zhao et al., 2021]{Zhao2021}	
Zhao, J., Kim, S., \& Kim, H.-M. (2021).
Closed-form estimators and bias-corrected estimators for the Nakagami distribution, 
\emph{Mathematics and Computers in Simulation}, 185,  308--324.	
	
\end{thebibliography}

\newpage
\begin{appendices}
\section{Examples of generators $T$}\label{table}

	\begin{table}[!h]
		\centering
		\caption{Some forms of generators $T(x)$ for use in \eqref{pdf-1}.}
		\label{table:1}
		\begin{tabular}[t]{lccc}
			\toprule
			Distribution & $\mu$ & $\sigma$ & $T(x)$  \\
			\midrule
			\rowcolor{gray!10} 
					Nakagami \citep{Laurenson1994}
			&  $m$  & ${1\over \Omega}$   & $x^2$ 
			\\[0.5ex]
				Maxwell-Boltzmann \citep{Dunbar1982}
&  ${3\over 2}$  & ${1\over 3\beta^2}$   & $x^2$ 
			\\[0.5ex]
			\rowcolor{gray!10} 
				Rayleigh \citep{Rayleigh1880}
&  $1$  & ${1\over 2\beta^2}$   & $x^2$ 
			\\[0.5ex]
				Gamma \citep{Stacy1962}
&  $\alpha$  &  ${1\over\alpha\beta}$ & $x$ 
			\\[0.5ex]
			\rowcolor{gray!10} 
				Inverse gamma \citep{Cook2008}
&  $\alpha$  &  ${1\over\alpha\beta}$ & ${1\over x}$ 
			\\[0.5ex]
				$\delta$-gamma \citep{Rahman2014}
&  ${\beta\over \delta}$  &  ${1\over \beta}$ & $x^\delta$ 
			\\[0.5ex]
			\rowcolor{gray!10} 
				Weibull \citep{Johnson1994}		
&  $1$ & ${1\over \beta^\delta}$ & $x^\delta$
			\\[0.5ex]
				Inverse Weibull (Fréchet) \citep{khan2008}
& $1$ & ${1\over \beta^\delta}$ & ${1\over x^\delta}$ 
			\\[0.5ex]
			\rowcolor{gray!10} 
				Generalized gamma \citep{Stacy1962}
& ${\alpha\over \delta}$ & ${\delta\over \alpha \beta^\delta}$ & $x^\delta$ 
			\\[0.5ex]
				Generalized inverse gamma \citep{Lee1991}
&  ${\alpha\over \delta}$ &  ${\delta\over \alpha\beta^{\delta}}$ & ${1\over x^\delta}$
			\\[0.5ex]
			\rowcolor{gray!10} 
				New log-generalized gamma
& ${\alpha\over \delta}$ & ${\delta\over \alpha \beta^\delta}$ & $[\exp(x)-1]^\delta $
			\\[0.5ex]
				New log-generalized inverse gamma
& ${\alpha\over \delta}$ & ${\delta\over \alpha \beta^\delta}$ & $\big[\exp\big({1\over x}\big)-1\big]^\delta $ 
			\\[0.5ex]
			\rowcolor{gray!10} 
				New exponentiated generalized gamma
& ${\alpha\over \delta}$ & ${\delta\over \alpha \beta^\delta}$ & $\log^\delta(x+1) $
			\\[0.5ex]
			New exponentiated generalized inverse gamma
& ${\alpha\over \delta}$ & ${\delta\over \alpha \beta^\delta}$ & $\log^\delta\big({1\over x}+1\big) $ 
			\\[0.5ex]
						\rowcolor{gray!10} 
				New modified log-generalized gamma
& ${\alpha\over \delta}$ & ${\delta\over \alpha \beta^\delta}$ & $\exp^\delta\big(x-{1\over x}\big) $ 
			\\[0.5ex]
				New extended log-generalized gamma
& ${\alpha\over \delta}$ & ${\delta\over \alpha \beta^\delta}$ & $x^\delta[\exp(x)-1]^\delta$ 
			\\[0.5ex]
						\rowcolor{gray!10} 
				Chi-squared \citep{Johnson1994}	
& ${\nu\over 2}$ & ${1\over \nu}$ & $x$ 
			\\[0.5ex]
				Scaled inverse chi-squared  \citep{Bernardo1993}
& ${\nu\over 2}$ & $\tau^2$ &  ${1\over x}$ 
			\\[0.5ex]
						\rowcolor{gray!10} 
				Gompertz \citep{Gompertz1825}
& $1$ & $\alpha$ &  $\exp(\delta x)-1$ 
			\\[0.5ex]
				Modified Weibull extension \citep{Xie2022}
& $\lambda\alpha$ & $1$ &  $\exp\big[ \left({x\over\alpha}\right)^\beta\big]-1$ 
			\\[0.5ex]
						\rowcolor{gray!10} 	
			Traditional Weibull \citep{Nadarajah2005}
& $1$ & $a$ &  $x^b[\exp( cx^d)-1]$ 
			\\[0.5ex]
				Flexible Weibull \citep{Bebbington2007}
& $1$ & $a$ &  $\exp\left(b x-{c\over x}\right)$
			\\[0.5ex]
						\rowcolor{gray!10} 
				Burr type XII (Singh-Maddala) \citep{Burr1942}
& $1$ & $k$ &  $\log(x^c+1)$ 
			\\[0.5ex]
				Dagum (Mielke Beta-Kappa) \citep{Dagum1975}
& $1$ & $k$ &  $\log\big({1\over x^c}+1\big)$
			\\[0.5ex] 	
			\bottomrule
		\end{tabular}
	\end{table}
In Table \ref{table:1}, we are assuming that $m\geqslant {1/2}$ and $\Omega$, $a,b,c,d,k$, $\alpha,\delta,\lambda,\beta, \nu,\tau^2>0$.

%
%

\end{appendices}
\end{document}